\theoremstyle{plain}
\numberwithin{equation}{section}
\newtheorem{thm}{Theorem}[section]
\newtheorem{lem}[thm]{Lemma}
\newtheorem{cor}[thm]{Corollary}
\newenvironment{exam}[1]%
{\begin{flushleft}\textbf{Example #1}.\enspace}%
{\end{flushleft}}
\newcommand{\complex}{{\mathbb C}}
\newcommand{\positive}{{\mathbb N}}
\newcommand{\real}{{\mathbb R}}
\newcommand{\ascript}{{\mathcal A}}
\newcommand{\bscript}{{\mathcal B}}
\newcommand{\cscript}{{\mathcal C}}
\newcommand{\pscript}{{\mathcal P}}
\newcommand{\qscript}{{\mathcal Q}}
\newcommand{\sscript}{{\mathcal S}}
\newcommand{\vscript}{{\mathcal V}}
\newcommand{\rmtr}{\mathrm{tr}}
\newcommand{\rmcyl}{\mathrm{cyl}}
\newcommand{\rmrank}{\mathrm{rank}}
\newcommand{\muhat}{\widehat{\mu}}
\newcommand{\ab}[1]{\left|#1\right|}
\newcommand{\doubleab}[1]{\left\|#1\right\|}
\newcommand{\brac}[1]{\left\{#1\right\}}
\newcommand{\paren}[1]{\left(#1\right)}
\newcommand{\sqbrac}[1]{\left[#1\right]}
\newcommand{\elbows}[1]{{\left\langle#1\right\rangle}}
\newcommand{\ket}[1]{{\left|#1\right>}}
\newcommand{\bra}[1]{{\left<#1\right|}}
\begin{document}

\title{MODELS FOR\\DISCRETE QUANTUM GRAVITY
}
\author{S. Gudder\\ Department of Mathematics\\
University of Denver\\ Denver, Colorado 80208, U.S.A.\\
sgudder@du.edu
}
\date{}
\maketitle

\begin{abstract}
We first discuss a framework for discrete quantum processes (DQP). It is shown that the set of $q$-probability operators is convex and its set of extreme elements is found. The property of consistency for a DQP is studied and the quadratic algebra of suitable sets is introduced. A classical sequential growth process is ``quantized'' to obtain a model for discrete quantum gravity called a quantum sequential growth process (QSGP). Two methods for constructing concrete examples of QSGP are provided.
\end{abstract}

\section{Introduction}  
In a previous article, the author introduced a general framework for a discrete quantum gravity \cite{gud11}. However, we did not include any concrete examples or models for this framework. In particular, we did not consider the problem of whether nontrivial models for a discrete quantum gravity actually exist. In this paper we provide a method for constructing an infinite number of such models. We first make a slight modification of our definition of a discrete quantum process (DQP) $\rho _n$, $n=1,2,\ldots\,$. Instead of requiring that $\rho _n$ be a state on a Hilbert space $H_n$, we require that $\rho _n$ be a $q$-probability operator on $H_n$. This latter condition seems more appropriate from a probabilistic viewpoint and instead of requiring $\rmtr (\rho _n)=1$, this condition normalizes the corresponding quantum measure. By superimposing a concrete DQP on a classical sequential growth process we obtain a model for discrete quantum gravity that we call a quantum sequential growth process.

Section~2 considers the DQP formalism. We show that the set of $q$-probability operators is a convex set and find its set of extreme elements. We discuss the property of consistency for a DQP and introduce the so-called quadratic algebra of suitable sets. The suitable sets are those on which well-defined quantum measures (or quantum probabilities) exist.

Section~3 reviews the concept of a classical sequential growth process (CSGP) \cite{blms87, hen09, rs00, sor03, sur11, vr06}. The important notions of paths and cylinder sets are discussed. In Section~4 we show how to ``quantize'' a CSGP to obtain a quantum sequential growth process (QSGP). Some results concerning the consistency of a DQP are given. Finally, Section~5 provides two methods for constructing examples of QSGP.

\section{Discrete Quantum Processes} 
Let $(\Omega ,\ascript ,\nu )$ be a probability space and let
\begin{equation*}
H=L_2(\Omega ,\ascript ,\nu )=\brac{f\colon\Omega\to\complex ,\int\ab{f}^2d\nu<\infty}
\end{equation*}
be the corresponding Hilbert space. Let $\ascript _1\subseteq\ascript _2\subseteq\cdots\subseteq\ascript$ be an increasing sequence of sub $\sigma$-algebras of $\ascript$ that generate $\ascript$ and let
$\nu _n=\nu\mid\ascript _n$ be the restriction of $\nu$ to $\ascript _n$, $n=1,2,\ldots\,$. Then
$H_n=L_2(\Omega ,\ascript _n,\nu _n)$ forms an increasing sequence of closed subspaces of $H$ called a
\textit{filtration} of $H$. A bounded operator $T$ on $H_n$ will also be considered as a bounded operator on $H$ by defining $Tf=0$ for all $f\in H_n^\perp$. We denote the characteristic function $\chi _\Omega$ of $\Omega$ by $1$. Of course, $\doubleab{1}=1$ and $\elbows{1,f}=\int fd\nu$ for every $f\in H$. A $q$-\textit{probability operator} is a bounded positive operator $\rho$ on $H$ that satisfies $\elbows{\rho 1,1}=1$. Denote the set of $q$-probability operators on $H$ and $H_n$ by $\qscript (H)$ and $\qscript (H_n)$, respectively. Since $1\in H_n$, if
$\rho\in\qscript (H_n)$ by our previous convention, $\rho\in\qscript (H)$. Notice that a positive operator
$\rho\in\qscript (H)$ if and only if $\doubleab{\rho ^{1/2}1}=1$ where $\rho ^{1/2}$ is the unique positive square root of $\rho$.

A $\rmrank~1$ element of $\qscript (H)$ is called a \textit{pure} $q$-\textit{probability operator}. Thus $\rho\in\qscript (H)$ is pure if and only if $\rho$ has the form $\rho =\ket{\psi}\bra{\psi}$ for some $\psi\in H$ satisfying
\begin{equation*}
\ab{\elbows{1,\psi}}=\ab{\int\psi d\nu}=1
\end{equation*}
We then call $\psi$ a $q$-\textit{probability vector} and we denote the set of $q$-probability vectors by
$\vscript (H)$ and the set of pure $q$-probability operators by $\qscript _p(H)$. Notice that if $\psi\in\vscript (H)$, then
$\doubleab{\psi}\ge 1$ and $\doubleab{\psi}=1$ if and only if $\psi =\alpha 1$ for some $\alpha\in\complex$ with
$\ab{c}=1$. Two operators $\rho _1,\rho _2\in\qscript (H)$ are \textit{orthogonal} if $\rho _1\rho _2=0$.

\begin{thm}       
\label{thm21}
{\rm (i)}\enspace $\qscript (A)$ is a convex set and $\qscript _p(H)$ is its set of extreme elements.
{\rm (ii)}\enspace $\rho\in\qscript (H)$ is of trace class if and only if there exists a sequence of mutually orthogonal
$\rho _i\in\qscript _p(H)$ and $\alpha _i>0$ with $\sum\alpha _i=1$ such that $\rho =\sum\alpha _i\rho _i$ in the strong operator topology. The $\rho _i$ are unique if and only if the corresponding $\alpha _i$ are distinct.
\end{thm}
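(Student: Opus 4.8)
The plan is to treat (i) through the geometry of the positive cone and (ii) through the spectral theorem.

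For (i), convexity is immediate: a convex combination $\lambda\rho_1+(1-\lambda)\rho_2$ of elements of $\qscript(H)$ is positive and satisfies $\elbows{(\lambda\rho_1+(1-\lambda)\rho_2)1,1}=\lambda+(1-\lambda)=1$. For the extreme points I would first record the elementary observation that a positive operator dominated by $\ket\psi\bra\psi$ is a nonnegative scalar multiple of $\ket\psi\bra\psi$ (it annihilates $\psi^{\perp}$, hence, being self-adjoint, maps $\psi$ into $\complex\psi$). Then: if $\rho=\ket\psi\bra\psi\in\qscript_p(H)$ and $\rho=\lambda\sigma_1+(1-\lambda)\sigma_2$ with $\sigma_i\in\qscript(H)$, $0<\lambda<1$, each $\sigma_i$ is dominated by a multiple of $\rho$, hence $\sigma_i=c_i\ket\psi\bra\psi$, and $1=\elbows{\sigma_i1,1}=c_i\ab{\elbows{1,\psi}}^2=c_i$ forces $\sigma_1=\sigma_2=\rho$; so pure operators are extreme. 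Conversely, if $\rho\in\qscript(H)$ has rank $\ge 2$ I produce a nontrivial splitting: using the spectral resolution of $\rho$ choose a two-dimensional subspace on which $\rho$ dominates a positive multiple of the corresponding projection, and inside it pick a nonzero self-adjoint $D$ with $\elbows{D1,1}=0$ and $-\rho\le D\le\rho$ (the two dimensions leave enough freedom, since $\elbows{D1,1}=0$ is a single linear condition). Then $\rho=\tfrac12(\rho+D)+\tfrac12(\rho-D)$ with $\rho\pm D\in\qscript(H)$ distinct, so $\rho$ is not extreme. Hence every extreme element has rank $1$, and a rank-one $\ket\psi\bra\psi\in\qscript(H)$ automatically has $\ab{\elbows{1,\psi}}^2=\elbows{\rho1,1}=1$, i.e.\ lies in $\qscript_p(H)$.

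For (ii), in the ``only if'' direction take the spectral decomposition $\rho=\sum\lambda_i\ket{e_i}\bra{e_i}$ of a trace-class $\rho\in\qscript(H)$, with $\lambda_i>0$ and $\brac{e_i}$ orthonormal; choosing the $e_i$ inside each eigenspace so that $\elbows{e_i,1}\ne0$ and setting $\psi_i=e_i/\overline{\elbows{e_i,1}}$ gives $\rho_i:=\ket{\psi_i}\bra{\psi_i}\in\qscript_p(H)$, $\rho_i\rho_j=0$ for $i\ne j$, and $\rho=\sum\alpha_i\rho_i$ with $\alpha_i=\lambda_i\ab{\elbows{1,e_i}}^2>0$ and $\sum\alpha_i=\elbows{\rho1,1}=1$, the partial sums converging strongly. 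In the ``if'' direction, orthogonality $\rho_i\rho_j=0$ forces $\elbows{\psi_i,\psi_j}=0$, so $\widehat\psi_i:=\psi_i/\doubleab{\psi_i}$ is an orthonormal family and $\rho=\sum(\alpha_i\doubleab{\psi_i}^2)\ket{\widehat\psi_i}\bra{\widehat\psi_i}$ is the spectral decomposition of $\rho$; from it one reads that $\rho$ is positive, bounded, and of trace class with $\rmtr\rho=\sum\alpha_i\doubleab{\psi_i}^2$, while $\elbows{\rho1,1}=\sum\alpha_i=1$. For uniqueness, the spectral projections onto the distinct eigenvalues of $\rho$ are unique, so the $\rho_i$ are determined precisely when the eigenvalues $\alpha_i\doubleab{\psi_i}^2$ are distinct; since $\ab{\elbows{1,\psi_i}}=1$ pins $\doubleab{\psi_i}$ down once $\widehat\psi_i$ is fixed, distinctness of the eigenvalues is equivalent to distinctness of the $\alpha_i$, and when two $\alpha_i$ coincide a rotation inside the corresponding two-dimensional eigenspace gives a genuinely different family.

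The step I expect to be the main obstacle is the interplay between the $q$-normalization and the trace-class bookkeeping in part (ii). In the ``only if'' direction one must ensure the eigenvectors can always be chosen non-orthogonal to $1$ --- a one-dimensional eigenspace contained in $1^{\perp}$ is a real threat and has to be excluded or circumvented --- and in the ``if'' direction one must control $\doubleab{\psi_i}$ well enough, using $\doubleab{\psi_i}\ge1$ and $\ab{\elbows{1,\psi_i}}=1$, to guarantee that strong convergence of $\sum\alpha_i\rho_i$ together with $\sum\alpha_i=1$ actually yields $\sum\alpha_i\doubleab{\psi_i}^2<\infty$; this is exactly where the argument is delicate and where I would concentrate the work.
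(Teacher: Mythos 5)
Your part (i) is correct and complete. For the direction ``extreme implies pure'' you take a genuinely different route from the paper: the paper splits $\rho$ spectrally into two positive pieces, renormalizes each by $\elbows{\rho _i1,1}$, and then treats separately the degenerate case $\rho _21=0$ and the case $\rho =\alpha P$ with $\rmrank (P)>1$; you instead perturb $\rho$ by a single nonzero self-adjoint $D$ supported on a two-dimensional spectral subspace, chosen in the hyperplane $\elbows{D1,1}=0$ and scaled so that $-\rho\le D\le\rho$. Your version handles all of the paper's cases at once, and your domination lemma (a positive operator dominated by $\ket{\psi}\bra{\psi}$ is a multiple of it) supplies the justification the paper omits for its assertion that $\rho _1\ne\rho _2$ would force $\rmrank (\rho )\ne 1$ in the ``pure implies extreme'' direction.

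Part (ii) is where the real problem lies. The paper's entire proof is ``this follows from the spectral theorem,'' and the two difficulties you isolate at the end are not merely delicate --- they are fatal: as stated, both implications fail, so the gap in your argument cannot be closed. For ``only if'': mutual orthogonality of the rank-one $\rho _i=\ket{\psi _i}\bra{\psi _i}$ forces each $\psi _i$ to be an eigenvector of $\rho =\sum\alpha _i\rho _i$, so any decomposition of the required form must be spectral. Now take $\psi\perp\phi$ with $\elbows{1,\psi}=1$, $\elbows{1,\phi}=0$, $\phi\ne 0$ and $\doubleab{\psi}\ne\doubleab{\phi}$, and set $\rho =\ket{\psi}\bra{\psi}+\ket{\phi}\bra{\phi}$. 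This is a positive trace-class operator with $\elbows{\rho 1,1}=1$, but one of its eigenlines, $\complex\phi$, lies in $1^\perp$ and so contains no $q$-probability vector; hence $\rho$ admits no decomposition of the required form. This is exactly the ``eigenspace contained in $1^\perp$'' you flagged, and it cannot be excluded. For ``if'': strong convergence of the increasing partial sums $\sum _{i\le N}\alpha _i\rho _i$ only bounds the eigenvalues $\alpha _i\doubleab{\psi _i}^2$; it does not make them summable. Concretely, let $V$ be an infinite-dimensional closed subspace containing $1$ and let $\brac{f_i}$ be an orthonormal basis of $V$ with every $\elbows{1,f_i}\ne 0$ (such bases exist, e.g.\ by iterated two-dimensional rotations of a basis containing $1$). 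Put $\psi _i=f_i/\overline{\elbows{f_i,1}}$ and $\alpha _i=\ab{\elbows{1,f_i}}^2$; then the $\rho _i=\ket{\psi _i}\bra{\psi _i}$ are mutually orthogonal elements of $\qscript _p(H)$, $\sum\alpha _i=\doubleab{P_V1}^2=1$, and $\sum\alpha _i\rho _i=\sum\ket{f_i}\bra{f_i}$ converges strongly to the projection onto $V$, which lies in $\qscript (H)$ but is not trace class. The uniqueness clause suffers from the same mismatch: the eigenvalues of $\rho$ are $\alpha _i\doubleab{\psi _i}^2$, which can coincide while the $\alpha _i$ are distinct (and conversely), so distinctness of the $\alpha _i$ is neither necessary nor sufficient for uniqueness of the $\rho _i$; your claimed equivalence at that point is not correct. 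Part (ii) would have to be reformulated --- keeping track of the decoupling between the coefficients $\alpha _i$ and the true eigenvalues $\alpha _i\doubleab{\psi _i}^2$, and excluding eigenvectors orthogonal to $1$ --- before any proof, yours or the paper's, can go through.
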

\begin{proof}
(i)\enspace If $0<\lambda <1$ and $\rho _1,\rho _2\in\qscript (H)$, then $\rho =\lambda\rho _1+(1-\lambda )\rho _2$ is a positive operator and
\begin{equation*}
\elbows{\rho 1,1}=\elbows{\paren{\lambda\rho +(1-\lambda}\rho _2)1,1}
  =\lambda\elbows{\rho _11,1}+(1-\lambda )\elbows{\rho _21,1}=1
\end{equation*}
Hence, $\rho\in\qscript (H)$ so $\qscript (H)$ is a convex set. Suppose $\rho\in\qscript _p(H)$ and
$\rho =\lambda\rho _1+(1-\lambda )\rho _2$ where $0<\lambda <1$ and $\rho _1,\rho _2\in\qscript (H)$. If
$\rho _1\ne\rho _2$, then $\rmrank (\rho )\ne 1$ which is a contradiction. Hence, $\rho _1=\rho _2$ so $\rho$ is an extreme element of $\qscript (H)$. Conversely, suppose $\rho\in\qscript (H)$ is an extreme element. If the cardinality of  the spectrum $\ab{\sigma (\rho )}>1$, then by the spectral theorem $\rho =\rho _1+\rho _2$ where
$\rho _1,\rho _2\ne 0$ are positive and $\rho _1\ne\alpha\rho _2$ for $\alpha\in\complex$. If
$\rho _11,\rho _21\ne 0$, then $\elbows{\rho _11,1},\elbows{\rho _21,1}\ne 0$ and we can write
\begin{equation*}
\rho =\elbows{\rho _11,1}\frac{\rho _1}{\elbows{\rho _11,1}}+\elbows{\rho _21,1}\frac{\rho _2}{\elbows{\rho _21,1}}
\end{equation*}
Now $\elbows{\rho _11,1}^{-1}\rho _1,\elbows{\rho _21,1}^{-1}\rho _2\in\qscript (H)$ and
\begin{equation*}
\elbows{\rho _11,1}+\elbows{\rho _21,1}=\elbows{\rho 1,1}=1
\end{equation*}
which is a contradiction. Hence, $\rho _11=0$ or $\rho _21=0$. Without loss of generality suppose that $\rho _21=0$. We can now write
\begin{equation*}
\rho =\tfrac{1}{2}\rho _1+\tfrac{1}{2}(\rho _1+2\rho _2)
\end{equation*}
Now $\rho _11\ne 0$, $(\rho _1+2\rho _2)1\ne 0$ and as before we get a contradiction. We conclude that
$\ab{\sigma (\rho )}=1$. Hence, $\rho =\alpha P$ where $P$ is a projection and $\alpha >0$. If $\rmrank (P)>1$, then
$P=P_1+P_2$ where $P_1$ and $P_2$ are orthogonal nonzero projections so $\rho =\alpha P_1+\alpha P_2$. Proceeding as before we obtain a contradiction. Hence, $\rmrank (P)=1$ so $\rho =\alpha P$ is pure.
(ii)\enspace This follows from the spectral theorem.
\end{proof}

Let $\brac{H_n\colon n=1,2,\ldots}$ be a filtration of $H$ and let $\rho _n\in\qscript (H_n)$, $n=1,2,\ldots\,$. The
$n$-\textit{decoherence functional} $D_n\colon\ascript _n\times\ascript _n\to\complex$ defined by
\begin{equation*}
D_n(A,B)=\elbows{\rho _n\chi _B,\chi _A}
\end{equation*}
gives a measure of the interference between $A$ and $B$ when the system is described by $\rho _n$. It is clear that $D_n(\Omega _n,\Omega _n)=1$, $D_n(A,B)=\overline{D_n(B,A)}$ and $A\mapsto D_n(A,B)$ is a complex measure for all $B\in\ascript _n$. It is also well-known that if $A_1,\ldots ,A_r\in\ascript _n$ then the matrix with entries
$D_n(A_j,A_k)$ is positive semidefinite. We define the map $\mu _n\colon\ascript _n\to\real ^+$ by
\begin{equation*}
\mu _n(A)=D_n(A,A)=\elbows{\rho _n\chi _A,\chi _A}
\end{equation*}
Notice that $\mu _n(\Omega _n)=1$. Although $\mu _n$ is not additive, it does satisfy the \textit{grade}-2
\textit{additivity condition}: if $A,B,C\in\ascript _n$ are mutually disjoint, then
\begin{align}         
\label{eq21}
\mu _n(A\cup B\cup C)&=\mu _n(A\cup B)+\mu _n(A\cup C)+\mu _n(B\cup C)\notag\\
  &\qquad-\mu _n(A)-\mu _n(B)-\mu _n(C)
\end{align}
 We say that $\rho _{n+1}$ is \textit{consistent} with $\rho _n$ if $D_{n+1}(A,B)=D_n(A,B)$ for all $A,B\in\ascript _n$. We call the sequence $\rho _n$, $n=1,2,\ldots$, \textit{consistent} if $\rho _{n+1}$ is consistent with $\rho _n$ for
$n=1,2,\ldots\,$. Of course, if the sequence $\rho _n$, $n=1,2,\ldots$, is consistent, then $\mu _{n+1}(A)=\mu _n(A)$
$\forall A\in\ascript _n$, $n=1,2,\ldots\,$. A \textit{discrete quantum process} (DQP) is a consistent sequence
$\rho _n\in\qscript (H_n)$ for a filtration $H_n$, $n=1,2,\ldots\,$. A DQP $\rho _n$ is \textit{pure} if
$\rho _n\in\qscript _p(H_n)$, $n=1,2,\ldots\,$.

If $\rho _n$ is a DQP, then the corresponding maps $\mu _n\colon\ascript _n\to\real ^+$ have the form
\begin{equation*}
\mu _n(A)=\elbows{\rho _n\chi _A,\chi _A}=\doubleab{\rho _n^{1/2}\chi _A}^2
\end{equation*}
Now $A\to\rho _n^{1/2}\chi _A$ is a vector-valued measure on $\ascript _n$. We conclude that $\mu _n$ is the squared norm of a vector-valued measure. In particular, if $\rho _n=\ket{\psi _n}\bra{\psi _n}$ is a pure DQP, then
$\mu _n(A)=\ab{\elbows{\psi _n,\chi _A}}^2$ so $\mu _n$ is the squared modulus of the complex-valued measure
$A\mapsto\elbows{\psi _n,\chi _A}$.

For a DQP $\rho _n\!\in\!\qscript (H_n)$, we say that a set $A\!\in\!\ascript$ is \textit{suitable} if
$\lim\elbows{\rho _j\chi _A,\chi _A}$ exists and is finite and in this case we define $\mu (A)$ to be the limit. We denote the set of suitable sets by $\sscript (\rho _n)$. If $A\in\ascript _n$ then
\begin{equation*}
\lim\elbows{\rho _j\chi _A,\chi _A}=\elbows{\rho _n\chi _A,\chi _A}
\end{equation*}
so $A\in\sscript (\rho _n)$ and $\mu (A)=\mu _n(A)$. This shows that the algebra
$\ascript _0=\cup\ascript _n\subseteq\sscript (\rho _n)$. In particular, $\Omega\in\sscript (\rho _n)$ and
$\mu (\Omega )=1$. In general, $\sscript (\rho _n)\ne\ascript$ and $\mu$ may not have a well-behaved extension from
$\ascript _0$ to all of $\ascript$ \cite{dgt08, sor11}. A subset $\bscript$ of $\ascript$ is a \textit{quadratic algebra} if
$\emptyset,\Omega\in\bscript$ and whenever $A,B,C\in\bscript$ are mutually disjoint with
$A\cup B,A\cup C,B\cup C\in\bscript$, we have $A\cup B\cup C\in\bscript$. For a quadratic algebra $\bscript$, a
$q$-\textit{measure} is a map $\mu _0\colon\bscript\to\real ^+$ that satisfies the grade-2 additivity condition
\eqref{eq21}. Of course, an algebra of sets is a quadratic algebra and we conclude that
$\mu _n\colon\ascript _n\to\real ^+$ is a $q$-measure. It is not hard to show that $\sscript (\rho _n)$ is a quadratic algebra and $\mu\colon\sscript (\rho _n)\to\real ^+$ is a $q$-measure on $\sscript (\rho _n)$ \cite{gud11}.

\section{Classical Sequential Growth Processes} 
A \textit{partially ordered set} (\textit{poset}) is a set $x$ together with an irreflexive, transitive relation $<$ on $x$. In this work we only consider unlabeled posets and isomorphic posets are considered to be identical. Let $\pscript _n$ be the collection of all posets with cardinality $n$, $n=1,2,\ldots\,$. If $x\in\pscript _n$, $y\in\pscript _{n+1}$, then $x$ \textit{produces} $y$ if $y$ is obtained from $x$ by adjoining a single new element to $x$ that is maximal in $y$. We also say that $x$ is a \textit{producer} of $y$ and $y$ is an \textit{offspring} of $x$. If $x$ produces $y$ we write
$x\to y$. We denote the set of offspring of $x$ by $x\to$ and for $A\subseteq\pscript _n$ we use the notation
\begin{equation*}
A\to\ =\brac{y\in\pscript _{n+1}\colon x\to y, x\in A}
\end{equation*}
The transitive closure of $\to$ makes the set of all finite posets $\pscript =\cup\pscript _n$ into a poset.

A \textit{path} in $\pscript$ is a string (sequence) $x_1,x_2,\ldots$ where $x_i\in\pscript _i$ and $x_i\to x_{i+1}$, $i=1,2,\ldots\,$. An $n$-\textit{path} in $\pscript$ is a finite string $x_1x_2\cdots x_n$ where again $x_i\in\pscript _i$ and $x_i\to x_{i+1}$. We denote the set of paths by $\Omega$ and the set of $n$-paths by $\Omega _n$. The set of paths whose initial $n$-path is $\omega _0\in\Omega _n$ is denoted by $\omega _0\Rightarrow$. Thus, if
$\omega _0=x_1x_2\cdots x_n$ then
\begin{equation*}
\omega _0\Rightarrow =\brac{\omega\in\Omega\colon\omega =x_1,x_2\cdots x_ny_{n+1}y_{n+2}\cdots}
\end{equation*}
If $x$ produces $y$ in $r$ isomorphic ways, we say that the \textit{multiplicity} of $x\to y$ is $r$ and write
$m(x\to y)=r$. For example, in Figure~1, $m(x\to y)=3$. (To be precise, these different isomorphic ways require a labeling of the posets and this is the only place that labeling needs to be mentioned.)
\vskip 7pc

\setlength{\unitlength}{8pt}
\begin{picture}(25,12)
\put(22,8){\circle{7}}   
\put(20.5,8){\circle*{.35}}
\put(22,8){\circle*{.35}}
\put(23.5,8){\circle*{.35}}
\put(22,4){\makebox{$x$}}
\put(22,10.5){\vector(0,1){4}}
\put(19.5,8){\vector(-1,1){7.5}}
\put(24.5,8){\vector(1,1){7.5}}
\put(10,17){\circle{7}}    
\put(8.5,16.2){\circle*{.35}}
\put(10,16.2){\circle*{.35}}
\put(11.5,16.2){\circle*{.35}}
\put(8.5,16.2){\line(1,2){.8}}
\put(10,16.2){\line(-1,2){.8}}
\put(9.23,17.85){\circle*{.35}}
\put(22,17){\circle{7}}        
\put(20.5,16.2){\circle*{.35}}
\put(22,16.2){\circle*{.35}}
\put(23.5,16.2){\circle*{.35}}
\put(23.5,16.2){\line(-1,2){.8}}
\put(22,16.2){\line(1,2){.8}}
\put(22.75,17.8){\circle*{.35}}
\put(34,17){\circle{7}}        
\put(32.5,16.2){\circle*{.35}}
\put(34,16.2){\circle*{.35}}
\put(35.5,16.2){\circle*{.35}}
\put(32.5,16.2){\line(1,1){1.5}}
\put(35.5,16.2){\line(-1,1){1.5}}
\put(34,17.7){\circle*{.35}}
\centerline{\textbf{\hskip -2pc Figure 1}}
\end{picture}
\vskip 2pc

If $x\in\pscript$ and $a,b\in x$ we say that $a$ is an \textit{ancestor} of $b$ and $b$ is a \textit{successor} of $a$ if
$a<b$. We say that $a$ is a \textit{parent} of $b$ and $b$ is a \textit{child} of $a$ if $a<b$ and there is no $c\in x$ such that $a<c<b$. Let $c=(c_0,c_1,\ldots )$ be a sequence of nonnegative numbers called \textit{coupling constants}
\cite{rs00, vr06}. For $r,s\in\positive$ with $r\le s$, we define
\begin{equation*}
\lambda _c(s,r)=\sum _{k=r}^s\binom{s-r}{k-r}c_k=\sum _{k=0}^{s-r}\binom{s-r}{k}c_{r+k}
\end{equation*}
For $x\in\pscript _n$ $y\in\pscript _{n+1}$ with $x\to y$ we define the \textit{transition probability}
\begin{equation*}
p _c(x\to y)=m(x\to y)\frac{\lambda _c(\alpha ,\pi )}{\lambda _c(n,0)}
\end{equation*}
where $\alpha$ is the number of ancestors and $\pi$ the number of parents of the adjoined maximal element in $y$ that produces $y$ from $x$. It is shown in \cite{rs00, vr06} that $p_c(x\to y)$ is a probability distribution in that it satisfies the Markov-sum rule
\begin{equation*}
\sum\brac{p_c(x\to y)\colon y\in\pscript _{n+1},x\to y}=1
\end{equation*}

In discrete quantum gravity, the elements of $\pscript$ are thought of as causal sets and $a<b$ is interpreted as $b$ being in the causal future of $a$. The distribution $y\mapsto p_c(x\to y)$ is essentially the most general that is consistent with principles of causality and covariance \cite{rs00, vr06}. It is hoped that other theoretical principles or experimental data will determine the coupling constants. One suggestion is to take $c_k=1/k!$ \cite{sor03, sor11}. The case $c_k=c^k$ for some $c>0$ has been previously studied and is called a \textit{percolation dynamics}
\cite{rs00, sor03, sur11}.

We call an element $x\in\pscript$ a \textit{site} and we sometimes call an $n$-path an $n$-\textit{universe} and a path a \textit{universe} The set $\pscript$ together with the set of transition probabilities $p_c(x\to y)$ forms a
\textit{classical sequential growth process} (CSGP) which we denote by $(\pscript ,p_c)$
\cite{hen09, rs00, sor03, sur11, vr06}. It is clear that $(\pscript ,p_c)$ is a Markov chain and as usual we define the probability of an $n$-path $\omega =y_1y_2\cdots y_n$ by
\begin{equation*}
p_c^n(\omega )=p_c(y_1\to y_2)p_c(y_2\to y_3)\cdots p_c(y_{n-1}\to y_n)
\end{equation*}
Denoting the power set of $\Omega _n$ by $2^{\Omega _n}$, $(\Omega _n,2^{\Omega _n},p_c^n)$ becomes a probability space where
\begin{equation*}
p_c^n(A)=\sum\brac{p_c^n(\omega )\colon\omega\in A}
\end{equation*}
for all $A\in2^{\Omega _n}$. The probability of a site $x\in\pscript _n$ is
\begin{equation*}
p_c^n(x)=\sum\brac{p_x^n(\omega )\colon\omega\in\Omega _n,\omega\hbox{ ends at }x}
\end{equation*}
Of course, $x\mapsto p_c^n(x)$ is a probability measure on $\pscript _n$ and we have
\begin{equation*}
\sum _{x\in\pscript _n}p_c^n(x)=1
\end{equation*}

\begin{exam}{1}  
Figure~2 illustrates the first two steps of a CSGP where the 2 indicates the multiplicity $m(x_3\to x_6)=2$. Table~1 lists the probabilities of the various sites for the general coupling constants $c_k$ and the particular coupling constants $c'_k=1/k!$ where $d=(c_0+c_1)(c_0+2c_1+c_2)$.
\end{exam}

\setlength{\unitlength}{8pt}
\begin{picture}(45,26)
\put(22,8){\circle{7}}  
\put(22,8){\circle*{.35}}
\put(17.5,7){\makebox{$x_1$}}
\put(19.5,8){\vector(-1,1){4}} 
\put(24.5,8){\vector(1,1){4}} 
\put(14,14){\circle{7}} 
\put(12,15.5){\vector(-1,1){5.5}} 
\put(16,15,5){\vector(1,1){5}} 
\put(14,15){\circle*{.35}}
\put(14,13){\circle*{.35}}
\put(14,13.2){\line(0,1){1.5}}
\put(9.75,13.75){\makebox{$x_2$}}
\put(14,16.5){\vector(0,1){4}} 
\put(30,14){\circle{7}} 
\put(28,15.5){\vector(-1,1){5}} 
\put(32,15,5){\vector(1,1){5}} 
\put(24,17){\makebox{$2$}}
\put(29,14){\circle*{.35}}
\put(31,14){\circle*{.35}}
\put(33,13.5){\makebox{$x_3$}}
\put(30,16.5){\vector(0,1){4}} 
\put(5,23){\circle{7}} 
\put(5,21.5){\circle*{.35}}
\put(5,23){\circle*{.35}}
\put(5,24.5){\circle*{.35}}
\put(5,21.5){\line(0,1){3}}
\put(.75,23){\makebox{$x_4$}}
\put(14,23){\circle{7}} 
\put(14,22){\circle*{.35}}
\put(13.25,23.5){\circle*{.35}}
\put(14.75,23.5){\circle*{.35}}
\put(14,22){\line(1,2){.8}}
\put(14,22){\line(-1,2){.8}}
\put(10,23){\makebox{$x_5$}}
\put(22,23){\circle{7}} 
\put(21,22){\circle*{.35}}
\put(22.5,22){\circle*{.35}}
\put(21,23.5){\circle*{.35}}
\put(21,22){\line(0,1){1.5}}
\put(18,23){\makebox{$x_6$}}
\put(30,23){\circle{7}} 
\put(29,22){\circle*{.35}}
\put(31,22){\circle*{.35}}
\put(30,23.75){\circle*{.35}}
\put(29,22){\line(1,2){.8}}
\put(31,22){\line(-1,2){.8}}
\put(26,23){\makebox{$x_7$}}
\put(38,23){\circle{7}} 
\put(36.5,23){\circle*{.35}}
\put(38,23){\circle*{.35}}
\put(39.5,23){\circle*{.35}}
\put(41,23){\makebox{$x_8$}}
\end{picture}
\vskip -2pc
\centerline{\textbf{Figure 2}}

\begin{center}  
\begin{tabular}{c|c|c|c|c|c|c|c|c}
$x_i$&$x_1$&$x_2$&$x_3$&$x_4$&$x_5$&$x_6$&$x_7$&$x_8$\\
\hline
$p_c^{(n)}(x_i)$&$1$&$\frac{c_1}{c_0+c_1}$&$\frac{c_0}{c_0+c_1}$&$\frac{c_1(c_1+c_2)}{d}$%
&$\frac{c_1^2}{d}$&$\frac{3c_0c_1}{d}$&$\frac{c_0c_2}{d}$&$\frac{c_0^2}{d}$\\
\hline
$p_{c'}^n(x_i)$&$1$&$\frac{1}{2}$&$\frac{1}{2}$&$\frac{3}{14}$&$\frac{1}{7}$%
&$\frac{3}{7}$&$\frac{1}{14}$&$\frac{1}{7}$\\
\noalign{\bigskip}
\multicolumn{9}{c}%
{\textbf{Table 1}}\\
\end{tabular}
\end{center}
\vskip 3pc

For $A\subseteq\Omega _n$ we use the notation
\begin{equation*}
A\Rightarrow =\cup\brac{\omega\Rightarrow\colon\omega\in A}
\end{equation*}
Thus, $A\Rightarrow$ is the set of paths whose initial $n$-paths are elements of $A$. We call $A\Rightarrow$ a
\textit{cylinder set} and define
\begin{equation*}
\ascript _n=\brac{A\Rightarrow\colon A\subseteq\Omega _n}
\end{equation*}
In particular, if $\omega\in\Omega _n$ then the \textit{elementary cylinder set} $\rmcyl (\omega )$ is given by
$\rmcyl (\omega )=\omega\Rightarrow$. It is easy to check that the $\ascript _n$ form an increasing sequence
$\ascript _1\subseteq\ascript _2\subseteq\cdots$ of algebras on $\Omega$ and hence
$\cscript (\Omega )=\cup\ascript _n$ is an algebra of subsets of $\Omega$. Also for $A\in\cscript (\Omega )$ of the form $A=A_1\Rightarrow$, $A_1\subseteq\Omega _n$,  we define $p_c(A)=p_c^n(A_1)$. It is easy to check that
$p_c$ is a well-defined probability measure on $\cscript (\Omega )$. It follows from the Kolmogorov extension theorem that
$p_c$ has a unique extension to a probability measure $\nu _c$ on the $\sigma$-algebra $\ascript$ generated by
$\cscript (\Omega )$. We conclude that $(\Omega ,\ascript ,\nu _c)$ is a probability space, the increasing sequence of subalgebras $\ascript _n$ generates $\ascript$ and that the restriction $\nu _c\mid\ascript _n=p_c^n$. Hence, the subspaces $H_n=L_2(\Omega ,\ascript _n,p_c^n)$ form a filtration of the Hilbert space
$H=L_2(\Omega ,\ascript ,\nu _c)$.

\section{Quantum Sequential Growth Processes} 
This section employs the framework of Section~2 to obtain a quantum sequential growth process (QSGP) from the CSGP $(\pscript ,p_c)$ developed in Section~3. We have seen that the $n$-\textit{path Hilbert space}
$H_n=L_2(\Omega ,\ascript _n,p_c^n)$ forms a filtration of the \textit{path Hilbert space}
$H=L_2(\Omega ,\ascript ,\nu _c)$. In the sequel, we assume that $p_c^n(\omega )\ne 0$ for every
$\omega\in\Omega _n$, $n=1,2,\ldots\,$. Then the set of vectors
\begin{equation*}
e_\omega ^n=p_c^n(\omega )^{1/2}\chi _{\rmcyl (\omega )},\omega\in\Omega _n
\end{equation*}
form an orthonormal basis for $H_n$, $n=1,2,\ldots\,$. For $A\in\ascript _n$, notice that $\chi _A\in H$ with
$\doubleab{\chi _A}=p_c^n(A)^{1/2}$.

We call a DQP $\rho _n\in\!\qscript (H_n)$ a \textit{quantum sequential growth process} (QSGP). We call $\rho _n$ the
\textit{local operators} and $\mu _n(A)=D_n(A,A)$ the \textit{local} $q$-\textit{measures} for the process. If
$\rho =\lim\rho _n$ exists in the strong operator topology, then $\rho$ is a $q$-probability operator on $H$ called the
\textit{global operator} for the process. If the global operator $\rho$ exists, then
$\muhat (A)=\elbows{\rho\chi _A,\chi _A}$ is a (continuous) $q$-measure on $\ascript$ that extends $\mu _n$,
$n=1,2,\ldots\,$. Unfortunately, the global operator does not exist, in general, so we must be content to work with the local operators \cite{dgt08, gud11, sor11}. In this case, we still have the $q$-measure $\mu$ on the quadratic algebra
$\sscript (\rho _n)\subseteq\ascript$ that extends $\mu _n$ $n=1,2,\ldots\,$. We frequently identify a set
$A\subseteq\Omega _n$ with the corresponding cylinder set $(A\Rightarrow )\in\ascript _n$. We then have the
$q$-measure, also denoted by $\mu _n$, on $2^{\Omega _n}$ defined by $\mu _n(A)=\mu _n(A\Rightarrow )$. Moreover, we define the $q$-measure, again denoted by $\mu _n$, on $\pscript _n$ by
\begin{equation*}
\mu _n(A)=\mu _n\paren{\brac{\omega\in\Omega _n\colon\omega\hbox{ end in }A}}
\end{equation*}
for all $A\subseteq\pscript _n$. In particular, for $x\in\pscript _n$ we have
\begin{equation*}
\mu _n\paren{\brac{x}}=\mu _n\paren{\brac{\omega\in\Omega _n\colon\omega\hbox{ ends with }x}}
\end{equation*}

If $A\in\ascript _n$ has the form $A_1\Rightarrow$ for $A_1\subseteq\Omega _n$ then $A\in\ascript _{n+1}$ and 
$A=(A_1\to )\Rightarrow$ where $A_1\to\subseteq\Omega _{n+1}$. Let $\rho _n\in\qscript (H_n)$,
$\rho _{n+1}\in\qscript (H_{n+1})$ and let $D_n(A,B)=\elbows{\rho _n\chi _B,\chi _A}$,
$D_{n+1}(A,B)=\elbows{\rho _{n+1}\chi _B,\chi _A}$ be the corresponding decoherence functionals. Then
$\rho _{n+1}$ is consistent with $\rho _n$ if and only if for all $A,B\subseteq\Omega _n$ we have
\begin{equation}         
\label{eq41}
D_{n+1}\sqbrac{(A\to )\Rightarrow,(B\to )\Rightarrow}=D_n(A\Rightarrow, B\Rightarrow )
\end{equation}

\begin{lem}       
\label{lem41}
For $\rho _n\in\qscript (H_n)$, $\rho _{n+1}\in\qscript (H_{n+1})$ we have that $\rho _{n+1}$ is consistent with
$\rho _n$ if and only if for all $\omega ,\omega '\in\Omega _n$ we have
\begin{equation}         
\label{eq42}
D_{n+1}\sqbrac{(\omega\to )\Rightarrow ,(\omega '\to )\Rightarrow}=D_n(\omega\Rightarrow, \omega '\Rightarrow )
\end{equation}
\end{lem}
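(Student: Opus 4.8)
The plan is to reduce the ``only if'' direction to a triviality and to obtain the ``if'' direction by exploiting the conjugate-bilinearity of the decoherence functional together with the fact that a cylinder set over a subset of $\Omega _n$ decomposes as a disjoint union of elementary cylinder sets. First I would dispose of the easy direction: if $\rho _{n+1}$ is consistent with $\rho _n$, then by the criterion \eqref{eq41} the equality holds for all $A,B\subseteq\Omega _n$, and specializing to the singletons $A=\brac{\omega}$, $B=\brac{\omega '}$ gives \eqref{eq42}.

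For the converse, the key structural observation is that for $A\subseteq\Omega _n$ one has $A\Rightarrow\ =\cup\brac{\omega\Rightarrow\colon\omega\in A}$ with the sets $\omega\Rightarrow$ mutually disjoint, since every path has a unique initial $n$-path; as $\Omega _n$ is finite (inherited from the finiteness of each $\pscript _i$), this yields $\chi _{A\Rightarrow}=\sum _{\omega\in A}\chi _{\omega\Rightarrow}$. Likewise $(A\to )\Rightarrow\ =\cup\brac{(\omega\to )\Rightarrow\colon\omega\in A}$ is a disjoint union, because an $(n+1)$-path also has a unique initial $n$-path, so $\chi _{(A\to )\Rightarrow}=\sum _{\omega\in A}\chi _{(\omega\to )\Rightarrow}$. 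Substituting these expansions into $D_n(A\Rightarrow ,B\Rightarrow )=\elbows{\rho _n\chi _{B\Rightarrow},\chi _{A\Rightarrow}}$ and into $D_{n+1}\sqbrac{(A\to )\Rightarrow ,(B\to )\Rightarrow}$, and using linearity of $\rho _n$, $\rho _{n+1}$ and sesquilinearity of the inner product, gives
\begin{align*}
D_n(A\Rightarrow ,B\Rightarrow )&=\sum _{\omega\in A}\sum _{\omega '\in B}D_n(\omega\Rightarrow ,\omega '\Rightarrow ),\\
D_{n+1}\sqbrac{(A\to )\Rightarrow ,(B\to )\Rightarrow}&=\sum _{\omega\in A}\sum _{\omega '\in B}D_{n+1}\sqbrac{(\omega\to )\Rightarrow ,(\omega '\to )\Rightarrow}.
\end{align*}
If \eqref{eq42} holds, the summands agree term by term, hence the two double sums agree, which is exactly \eqref{eq41}; by the criterion recorded just before the lemma, $\rho _{n+1}$ is consistent with $\rho _n$.

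The only point that requires care — and the step I would expect to be the main (and rather minor) obstacle — is justifying $\chi _{(A\to )\Rightarrow}=\sum _{\omega\in A}\chi _{(\omega\to )\Rightarrow}$, that is, the pairwise disjointness of the sets $(\omega\to )\Rightarrow$ for distinct $\omega\in\Omega _n$. This is where one uses that the length-$n$ initial segment of any path is uniquely determined, so a path lying in $(\omega\to )\Rightarrow$ cannot lie in $(\omega '\to )\Rightarrow$ when $\omega\neq\omega '$. Everything else is bookkeeping, and the finiteness of $\Omega _n$ keeps all sums finite so that no convergence issues arise.
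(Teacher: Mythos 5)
Your proof is correct and follows essentially the same route as the paper's: the necessity direction is handled by specializing \eqref{eq41} to singletons, and the sufficiency direction by decomposing $A\Rightarrow$ and $(A\to)\Rightarrow$ into disjoint unions of the elementary pieces, expanding $D_n$ and $D_{n+1}$ biadditively into double sums, and applying \eqref{eq42} termwise. The extra care you take in justifying the pairwise disjointness of the sets $(\omega\to)\Rightarrow$ is a detail the paper leaves implicit, but it is the same argument.
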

\begin{proof}
Necessity is clear. For sufficiency, suppose \eqref{eq42} holds. Then for every $A,B\subseteq\Omega _n$ we have
\begin{align*}
D_{n+1}\sqbrac{(A\to )\Rightarrow ,(B\to )\Rightarrow )}
  &=\sum _{\omega\in A}\sum _{\omega '\in B}D_{n+1}
  D_{n+1}\sqbrac{(\omega\to )\Rightarrow ,(\omega '\to)\Rightarrow}\\
  &=\sum _{\omega\in A}\sum _{\omega '\in B}D_n(\omega\Rightarrow ,\omega '\Rightarrow )
  =D_n(A\Rightarrow ,B\Rightarrow )
\end{align*}
and the result follows from \eqref{eq41}.
\end{proof}

For $\omega =x_1x_2\cdots x_n\in\Omega _n$ and $y\in\pscript _{n+1}$ with $x_n\to y$ we use the notation
$\omega y\in\Omega _{n+1}$ where $\omega y=x_1x_2\cdots x_ny$. We also define
$p_c (\omega\to y)=p_c(x_n\to y)$ and write $\omega\to y$ whenever $x_n\to y$.

\begin{thm}       
\label{thm42}
For $\rho _n\in\qscript (H_n)$, $\rho _{n+1}\in\qscript (H_{n+1})$ we have that $\rho _{n+1}$ is consistent with 
$\rho _n$ if and only if for every $\omega ,\omega '\in\Omega _n$ we have
\begin{equation}         
\label{eq43}
\elbows{\rho _ne _{\omega '}^n,e_\omega ^n}
  =\sum _{\substack{x\in\pscript _{n+1}\\\omega '\to x}}\sum _{\substack{y\in\pscript _{n+1}\\\omega\to y}}
  p_c(\omega '\to x)^{1/2}p_c(\omega\to y)^{1/2}\elbows{\rho _{n+1}e_{\omega 'x}^{n+1}e_{\omega y}^{n+1}}
\end{equation}
\end{thm}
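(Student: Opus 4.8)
The plan is to reduce everything to Lemma~\ref{lem41} and then simply rewrite the two decoherence functionals in \eqref{eq42} in terms of the orthonormal bases $\brac{e_\omega^n\colon\omega\in\Omega_n}$ of $H_n$ and $\brac{e_\omega^{n+1}\colon\omega\in\Omega_{n+1}}$ of $H_{n+1}$. So this is essentially a change-of-basis bookkeeping argument once Lemma~\ref{lem41} is in hand.

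First I would record the two expansions that drive the computation. Since $e_\omega^n$ is the normalization of $\chi_{\rmcyl(\omega)}$ and $\doubleab{\chi_{\rmcyl(\omega)}}=p_c^n(\omega)^{1/2}$, we have $\chi_{\rmcyl(\omega)}=p_c^n(\omega)^{1/2}e_\omega^n$; and since $(\omega\to)\Rightarrow$ is the \emph{disjoint} union of the finitely many elementary cylinder sets $\rmcyl(\omega y)$ with $\omega\to y$, we get $\chi_{(\omega\to)\Rightarrow}=\sum_{\omega\to y}p_c^{n+1}(\omega y)^{1/2}e_{\omega y}^{n+1}$. (All these coefficients are nonzero by the standing assumption $p_c^n(\omega)\neq0$, which is also what makes division legitimate later.) Substituting the first expansion into $D_n(\omega\Rightarrow,\omega'\Rightarrow)=\elbows{\rho_n\chi_{\rmcyl(\omega')},\chi_{\rmcyl(\omega)}}$ yields
\[
D_n(\omega\Rightarrow,\omega'\Rightarrow)=p_c^n(\omega)^{1/2}p_c^n(\omega')^{1/2}\elbows{\rho_ne_{\omega'}^n,e_\omega^n},
\]
and substituting the second expansion into $D_{n+1}\sqbrac{(\omega\to)\Rightarrow,(\omega'\to)\Rightarrow}$, using bilinearity over the finite index sets, yields
\[
D_{n+1}\sqbrac{(\omega\to)\Rightarrow,(\omega'\to)\Rightarrow}=\sum_{\omega'\to x}\sum_{\omega\to y}p_c^{n+1}(\omega'x)^{1/2}p_c^{n+1}(\omega y)^{1/2}\elbows{\rho_{n+1}e_{\omega'x}^{n+1},e_{\omega y}^{n+1}}.
\]

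Next, by Lemma~\ref{lem41}, $\rho_{n+1}$ is consistent with $\rho_n$ if and only if these two quantities agree for every $\omega,\omega'\in\Omega_n$. Equating them, dividing both sides by $p_c^n(\omega)^{1/2}p_c^n(\omega')^{1/2}$, and invoking the chain rule $p_c^{n+1}(\omega y)=p_c^n(\omega)p_c(\omega\to y)$ (so that $p_c^{n+1}(\omega y)^{1/2}/p_c^n(\omega)^{1/2}=p_c(\omega\to y)^{1/2}$, and likewise for $\omega'x$) collapses the identity to exactly \eqref{eq43}. Conversely, multiplying \eqref{eq43} through by $p_c^n(\omega)^{1/2}p_c^n(\omega')^{1/2}$ and reversing these steps recovers \eqref{eq42} for all $\omega,\omega'$, hence consistency by Lemma~\ref{lem41}. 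Since the reasoning is reversible at each stage, this gives the stated equivalence.

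I do not expect any serious obstacle: the content is entirely in the two expansions above. The only points that need care are genuinely routine ones, namely that $(\omega\to)\Rightarrow$ is literally the disjoint union $\cup\brac{\rmcyl(\omega y)\colon\omega\to y}$ so that the decoherence functional splits as the displayed finite bilinear sum, and that the probability ratios telescope to one-step transition probabilities through $p_c^{n+1}(\omega y)=p_c^n(\omega)p_c(\omega\to y)$; the hypothesis that every $p_c^n(\omega)$ is nonzero is exactly what makes the basis coefficients well defined and the division valid.
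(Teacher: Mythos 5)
Your proposal is correct and follows essentially the same route as the paper: reduce to Lemma~\ref{lem41}, expand $\chi_{\rmcyl(\omega)}=p_c^n(\omega)^{1/2}e_\omega^n$ and $\chi_{(\omega\to)\Rightarrow}=\sum_{\omega\to y}p_c^{n+1}(\omega y)^{1/2}e_{\omega y}^{n+1}$, and use $p_c^{n+1}(\omega y)=p_c^n(\omega)p_c(\omega\to y)$ to cancel the common factor $p_c^n(\omega)^{1/2}p_c^n(\omega')^{1/2}$. No gaps; your explicit attention to the disjointness of the elementary cylinders and the nonvanishing of $p_c^n(\omega)$ is exactly the care the paper's computation implicitly relies on.
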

\begin{proof}
By Lemma~\ref{lem41}, $\rho _{n+1}$ is consistent with $\rho _n$ if and only if \eqref{eq42} holds. But
\begin{align*}
D_n(\omega\Rightarrow ,\omega '\Rightarrow )
  &=\elbows{\rho _n\chi _{\omega '\Rightarrow},\chi _{\omega\Rightarrow}}
  =\elbows{\rho _n\chi _{\rmcyl (\omega ')},\chi _{\rmcyl (\omega )}}\\
  &=p_c^n(\omega ')^{1/2}p_c^n(\omega )^{1/2}\elbows{\rho _ne_{\omega '}^n,e_\omega ^n}
\end{align*}
Moreover, we have
\begin{align*}
D_{n+1}\sqbrac{(\omega\to )\Rightarrow ,(\omega '\to)\Rightarrow}
  &=\elbows{\rho _{n+1}\chi _{(\omega '\to)\Rightarrow},\chi _{(\omega\to )\Rightarrow}}\\
  &=\sum _{\substack{x\in\pscript _{n+1}\\\omega '\to x}}\sum _{\substack{y\in\pscript _{n+1}\\\omega\to y}}
  \elbows{\rho _{n+1}\chi _{\omega 'x\Rightarrow},\chi _{\omega y\Rightarrow}}\\
  &=\sum _{\substack{x\in\pscript _{n+1}\\\omega '\to x}}\sum _{\substack{y\in\pscript _{n+1}\\\omega\to y}}
  \elbows{\rho _{n+1}\chi _{\rmcyl (\omega 'x)},\chi _{\rmcyl (\omega y)}}\\
  &\hskip -10pc
  =\sum _{\substack{x\in\pscript _{n+1}\\\omega '\to x}}\sum _{\substack{y\in\pscript _{n+1}\\\omega\to y}}
  p_c^n(\omega 'x)^{1/2}p_c^n(\omega y)^{1/2}\elbows{\rho _{n+1}e_{\omega 'x}^{n+1},e_{\omega y}^{n+1}}\\
  &\hskip -10pc
  =p_c^n(\omega ')^{1/2}p_c^n(\omega )^{1/2}
  \sum _{\substack{x\in\pscript _{n+1}\\\omega '\to x}}\sum _{\substack{y\in\pscript _{n+1}\\\omega\to y}}
  p_c(\omega '\to x)p_c(\omega\to y)^{1/2}\elbows{\rho _ne_{\omega 'x}^{n+1},e_{\omega y}^{n+1}}
\end{align*}
The result now follows.
\end{proof}

Viewing $H_n$ as $L_2(\Omega _n,2^{\Omega _n},p_c^n)$ we can write \eqref{eq43} in the simple form
\begin{equation}         
\label{eq44}
\elbows{\rho _n\chi _{\brac{\omega '}},\chi _{\brac{\omega}}}
  =\elbows{\rho _{n+1}\chi _{\omega '\to},\chi _{\omega\to}}
\end{equation}

\begin{cor}       
\label{cor43}
A sequence $\rho _n\in\qscript (H_n)$ is a QSGP if and only if \eqref{eq43} or \eqref{eq44} hold for every
$\omega ,\omega '\in\Omega _n$, $n=1,2,\ldots\,$.
\end{cor}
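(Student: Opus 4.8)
The plan is to obtain Corollary~\ref{cor43} as an immediate consequence of Theorem~\ref{thm42} together with the definitions, so no new machinery is needed. First I would unwind the terminology. By definition a QSGP is a DQP $\rho_n\in\qscript(H_n)$, and a DQP is a \emph{consistent} sequence, which means by definition that $\rho_{n+1}$ is consistent with $\rho_n$ for every $n=1,2,\ldots\,$. Hence a given sequence $\rho_n\in\qscript(H_n)$ is a QSGP precisely when each consecutive pair $(\rho_n,\rho_{n+1})$ is a consistent pair. Applying Theorem~\ref{thm42} to the pair $(\rho_n,\rho_{n+1})$ for each fixed $n$, consistency of that pair is equivalent to the validity of \eqref{eq43} for all $\omega,\omega'\in\Omega_n$; quantifying over $n$ then yields exactly the assertion that $\rho_n$ is a QSGP if and only if \eqref{eq43} holds for every $\omega,\omega'\in\Omega_n$ and every $n$.

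Next I would check that \eqref{eq44} is merely \eqref{eq43} rewritten, which is the only place a small computation enters. Under the canonical unitary identification of $H_m=L_2(\Omega,\ascript_m,p_c^m)$ with $L_2(\Omega_m,2^{\Omega_m},p_c^m)$ the cylinder indicator $\chi_{\rmcyl(\omega)}$ corresponds to $\chi_{\{\omega\}}$, and comparing with the computation in the proof of Theorem~\ref{thm42} one has $\chi_{\{\omega\}}=p_c^n(\omega)^{1/2}e_\omega^n$; likewise, since $p_c^{n+1}(\omega y)=p_c^n(\omega)\,p_c(\omega\to y)$, one has $\chi_{\{\omega y\}}=p_c^n(\omega)^{1/2}p_c(\omega\to y)^{1/2}e_{\omega y}^{n+1}$. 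Writing $\chi_{\omega\to}=\sum_{y\colon\omega\to y}\chi_{\{\omega y\}}$, expanding both inner products in \eqref{eq44} bilinearly, and then cancelling the common nonzero factor $p_c^n(\omega)^{1/2}p_c^n(\omega')^{1/2}$ (nonzero by the standing assumption $p_c^n(\omega)\neq 0$) turns \eqref{eq44} into \eqref{eq43}; the reverse direction is the same algebra run backwards. This establishes that \eqref{eq43} and \eqref{eq44} are interchangeable, completing the argument.

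I do not expect any genuine obstacle here: the corollary is a bookkeeping restatement of Theorem~\ref{thm42}, and the only thing that requires care is tracking the normalization constants $p_c^m(\cdot)^{1/2}$ relating the cylinder indicators $\chi_A$, the orthonormal basis vectors $e_\omega^m$, and the step transition probabilities $p_c(\omega\to y)$, together with the multiplicativity $p_c^{n+1}(\omega y)=p_c^n(\omega)p_c(\omega\to y)$ of the path probabilities along the chain.
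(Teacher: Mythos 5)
Your proposal is correct and is essentially the paper's own (unstated) argument: the corollary is left as an immediate consequence of Theorem~\ref{thm42} once one unwinds QSGP $\Rightarrow$ DQP $\Rightarrow$ consistency of each consecutive pair, and the passage between \eqref{eq43} and \eqref{eq44} is exactly the normalization bookkeeping you describe. Note only that you have (correctly) used the unit-norm convention $\chi_{\rmcyl(\omega)}=p_c^n(\omega)^{1/2}e_\omega^n$ implicit in the proof of Theorem~\ref{thm42}, even though the paper's displayed definition of $e_\omega^n$ carries the exponent $1/2$ rather than $-1/2$.
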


We now consider pure $q$-probability operators. In the following results we again view $H_n$ as
$L_2(\Omega _n,2^{\Omega _n},p_c^n)$.

\begin{cor}       
\label{cor44}
If $\rho _n\in\qscript _p(H_n)$, $\rho _{n+1}\in\qscript _p(H_{n+1})$ with $p_n=\ket{\psi _n}\bra{\psi _n}$,
$\rho _{n+1}=\ket{\psi _{n+1}}\bra{\psi _{n+1}}$, then $\rho _{n+1}$ is consistent with $\rho _n$ if and only if for
every $\omega ,\omega '\in\Omega _n$ we have
\begin{equation}         
\label{eq45}
\elbows{\psi _n,\chi _{\brac{\omega}}}\elbows{\chi _{\brac{\omega '}},\psi _n}
=\elbows{\psi _{n+1},\chi _{\omega\to}}\elbows{\chi _{\omega '\to},\psi _{n+1}}
\end{equation}
\end{cor}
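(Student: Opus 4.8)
The plan is to reduce the statement immediately to the scalar identity \eqref{eq44}. By Theorem~\ref{thm42} and its reformulation \eqref{eq44} (valid once $H_n$ is viewed as $L_2(\Omega _n,2^{\Omega _n},p_c^n)$ and $H_{n+1}$ as $L_2(\Omega _{n+1},2^{\Omega _{n+1}},p_c^{n+1})$), the operator $\rho _{n+1}$ is consistent with $\rho _n$ if and only if
\begin{equation*}
\elbows{\rho _n\chi _{\brac{\omega '}},\chi _{\brac{\omega}}}=\elbows{\rho _{n+1}\chi _{\omega '\to},\chi _{\omega\to}}
\end{equation*}
for all $\omega ,\omega '\in\Omega _n$. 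So it is enough to show that, when $\rho _n=\ket{\psi _n}\bra{\psi _n}$ and $\rho _{n+1}=\ket{\psi _{n+1}}\bra{\psi _{n+1}}$, the two sides of this identity equal the two sides of \eqref{eq45}.

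For the left-hand side I would use $\rho _n\chi _{\brac{\omega '}}=\elbows{\psi _n,\chi _{\brac{\omega '}}}\psi _n$, pair with $\chi _{\brac{\omega}}$, and pull the scalar out of the conjugate-linear first slot of the inner product:
\begin{equation*}
\elbows{\rho _n\chi _{\brac{\omega '}},\chi _{\brac{\omega}}}=\overline{\elbows{\psi _n,\chi _{\brac{\omega '}}}}\,\elbows{\psi _n,\chi _{\brac{\omega}}}=\elbows{\chi _{\brac{\omega '}},\psi _n}\elbows{\psi _n,\chi _{\brac{\omega}}},
\end{equation*}
which after reordering the two scalar factors is precisely the left-hand side of \eqref{eq45}. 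The identical computation for $\rho _{n+1}=\ket{\psi _{n+1}}\bra{\psi _{n+1}}$, with $\chi _{\omega '\to}$ and $\chi _{\omega\to}$ (elements of $H_{n+1}$) in place of $\chi _{\brac{\omega '}}$ and $\chi _{\brac{\omega}}$, gives $\elbows{\rho _{n+1}\chi _{\omega '\to},\chi _{\omega\to}}=\elbows{\chi _{\omega '\to},\psi _{n+1}}\elbows{\psi _{n+1},\chi _{\omega\to}}$, the right-hand side of \eqref{eq45}.

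Combining the two computations, \eqref{eq44} holds for a given pair $\omega ,\omega '$ exactly when \eqref{eq45} does, and letting $\omega ,\omega '$ range over $\Omega _n$ finishes the argument. There is no genuine obstacle here; the only points needing care are the inner-product convention (which slot is conjugate-linear, so that conjugation turns $\elbows{\psi _n,\chi _{\brac{\omega '}}}$ into $\elbows{\chi _{\brac{\omega '}},\psi _n}$) and reading $\chi _{\omega\to}$ in $H_{n+1}$ rather than $H_n$. The hypothesis that $\rho _n,\rho _{n+1}$ are pure enters only through writing them as $\ket{\psi}\bra{\psi}$; the normalization $\ab{\elbows{1,\psi _n}}=1$ is not used in the equivalence.
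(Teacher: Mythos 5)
Your proposal is correct and follows exactly the route the paper intends: Corollary~\ref{cor44} is stated without proof precisely because it is the specialization of \eqref{eq44} to rank-one operators, and your computation $\elbows{\rho_n\chi_{\brac{\omega'}},\chi_{\brac{\omega}}}=\elbows{\chi_{\brac{\omega'}},\psi_n}\elbows{\psi_n,\chi_{\brac{\omega}}}$ (with the matching identity for $\rho_{n+1}$) is the only step needed. Your attention to the conjugate-linear slot is consistent with the paper's convention $\elbows{1,f}=\int f\,d\nu$, so nothing is missing.
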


\begin{cor}       
\label{cor45}
A sequence $\ket{\psi _n}\bra{\psi _n}\in\qscript _p(H_n)$ is a QSGP if and only if \eqref{eq45} holds for every
$\omega ,\omega '\in\Omega _n$.
\end{cor}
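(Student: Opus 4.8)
The plan is to deduce this directly from Corollary~\ref{cor44} by quantifying over the level index $n$. I would begin by unwinding the definitions: a QSGP is by definition a DQP, i.e.\ a \emph{consistent} sequence $\rho _n\in\qscript (H_n)$, and since each $\ket{\psi _n}\bra{\psi _n}$ already has rank~$1$ it lies in $\qscript _p(H_n)$ precisely when $\psi _n\in\vscript (H_n)$. Thus, for a sequence of this form, the entire content of "being a QSGP" is that $\rho _{n+1}$ is consistent with $\rho _n$ for every $n=1,2,\ldots\,$; no positivity or normalization has to be re-checked along the way, since the hypothesis $\ket{\psi _n}\bra{\psi _n}\in\qscript _p(H_n)$ already supplies those (in particular the $q$-probability vector condition $\ab{\elbows{1,\psi _n}}=1$).

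Next I would apply Corollary~\ref{cor44} at each fixed $n$: it states that $\ket{\psi _{n+1}}\bra{\psi _{n+1}}$ is consistent with $\ket{\psi _n}\bra{\psi _n}$ if and only if \eqref{eq45} holds for all $\omega ,\omega '\in\Omega _n$. Conjoining these equivalences over $n=1,2,\ldots$ yields that the sequence is a QSGP if and only if \eqref{eq45} holds for all $\omega ,\omega '\in\Omega _n$ and all $n$, which is exactly the assertion, the universal quantifier over $n$ being understood precisely as in Corollary~\ref{cor43}.

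I do not anticipate any genuine obstacle here: this corollary is simply the "all levels at once" repackaging of the per-level Corollary~\ref{cor44}, mirroring the way Corollary~\ref{cor43} repackages Theorem~\ref{thm42} (equivalently \eqref{eq44}). The only point worth a sentence of care is the remark above, that purity is assumed outright so that "is a QSGP" collapses cleanly to "is consistent," and hence the whole statement is carried by the consistency condition \eqref{eq45}.
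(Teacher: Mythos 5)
Your proposal is correct and matches the paper's (implicit) argument exactly: the paper states Corollary~\ref{cor45} without proof as the immediate consequence of Corollary~\ref{cor44} obtained by quantifying the per-level consistency equivalence over all $n$, just as Corollary~\ref{cor43} repackages Theorem~\ref{thm42}. Your observation that the hypothesis $\ket{\psi _n}\bra{\psi _n}\in\qscript _p(H_n)$ already supplies the $q$-probability conditions, so that ``QSGP'' reduces to consistency alone, is precisely the point.
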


We say that $\psi _{n+1}\in\vscript (H_{n+1})$ is \textit{strongly consistent} with $\psi _n\in\vscript (H_n)$ if for every
$\omega\in\Omega _n$ we have
\begin{equation}         
\label{eq46}
\elbows{\psi _n,\chi _{\brac{\omega}}}=\elbows{\psi _{n+1},\chi _{\omega\to}}
\end{equation}
By \eqref{eq45} strong consistency implies the consistency of the corresponding $q$-probability operators.

\begin{cor}       
\label{cor46}
If $\psi _{n+1}\in\vscript (H_{n+1})$ is strongly consistent with $\psi _n\in\vscript (H_n)$, $n=1,2,\ldots$, then
$\ket{\psi _n}\bra{\psi _n}\in\qscript _p(H_n)$ is a QSGP.
\end{cor}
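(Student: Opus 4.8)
The plan is to observe that Corollary~\ref{cor46} is an almost immediate consequence of the chain of results just established, so the proof reduces to a short bookkeeping argument. First I would recall that, by hypothesis, $\psi _{n+1}\in\vscript (H_{n+1})$ is strongly consistent with $\psi _n\in\vscript (H_n)$ for every $n$, which means that \eqref{eq46} holds for all $\omega\in\Omega _n$. The first task is to verify that each $\ket{\psi _n}\bra{\psi _n}$ genuinely lies in $\qscript _p(H_n)$: since $\psi _n\in\vscript (H_n)$, we have $\ab{\elbows{1,\psi _n}}=1$, and by the discussion preceding Theorem~\ref{thm21} this is exactly the condition for $\ket{\psi _n}\bra{\psi _n}$ to be a pure $q$-probability operator on $H_n$. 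So the operators are of the right type; what remains is consistency.

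Next I would deduce \eqref{eq45} from \eqref{eq46}. Fix $n$ and $\omega,\omega '\in\Omega _n$. Applying \eqref{eq46} with the path $\omega$ gives $\elbows{\psi _n,\chi _{\brac{\omega}}}=\elbows{\psi _{n+1},\chi _{\omega\to}}$, and applying it with the path $\omega '$ and then taking complex conjugates (using $\overline{\elbows{\psi _{n+1},\chi _{\omega '\to}}}=\elbows{\chi _{\omega '\to},\psi _{n+1}}$ and similarly on the left) gives $\elbows{\chi _{\brac{\omega '}},\psi _n}=\elbows{\chi _{\omega '\to},\psi _{n+1}}$. Multiplying these two scalar identities yields
\begin{equation*}
\elbows{\psi _n,\chi _{\brac{\omega}}}\elbows{\chi _{\brac{\omega '}},\psi _n}=\elbows{\psi _{n+1},\chi _{\omega\to}}\elbows{\chi _{\omega '\to},\psi _{n+1}},
\end{equation*}
which is precisely \eqref{eq45}. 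This is the remark already flagged in the text right after \eqref{eq46}, so it is essentially a one-line computation; the only mild care needed is the conjugation step, which is harmless since each side of \eqref{eq46} is a single inner product.

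Finally, since \eqref{eq45} holds for every $\omega,\omega '\in\Omega _n$ and every $n=1,2,\ldots$, Corollary~\ref{cor44} tells us that $\rho _{n+1}=\ket{\psi _{n+1}}\bra{\psi _{n+1}}$ is consistent with $\rho _n=\ket{\psi _n}\bra{\psi _n}$ for each $n$; equivalently, invoking Corollary~\ref{cor45} directly, the sequence $\ket{\psi _n}\bra{\psi _n}\in\qscript _p(H_n)$ satisfies the criterion \eqref{eq45} for being a QSGP. Hence $\ket{\psi _n}\bra{\psi _n}$ is a QSGP, as claimed. I do not anticipate any real obstacle here: the substantive work was done in establishing Theorem~\ref{thm42} and its corollaries, and strong consistency was engineered precisely so that \eqref{eq46} factors through to \eqref{eq45}. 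The only point worth stating carefully is the verification that the $\ket{\psi _n}\bra{\psi _n}$ are $q$-probability operators on the respective $H_n$ and that consistency, via our earlier conventions, promotes them to operators on $H$; once that is noted, the conclusion is immediate from Corollary~\ref{cor45}.
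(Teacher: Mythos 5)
Your proof is correct and follows exactly the route the paper intends: the paper gives no separate proof of Corollary~\ref{cor46}, relying on the remark immediately after \eqref{eq46} that strong consistency implies \eqref{eq45} (by multiplying the identity for $\omega$ with the conjugated identity for $\omega'$, just as you do) together with Corollary~\ref{cor45}. Your additional check that $\psi_n\in\vscript(H_n)$ makes $\ket{\psi_n}\bra{\psi_n}$ a pure $q$-probability operator is a harmless and correct bit of bookkeeping.
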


\begin{lem}       
\label{lem47}
If $\psi _n\in\vscript (H_n)$ and $\psi _{n+1}\in H_{n+1}$ satisfies \eqref{eq46} for every $\omega\in\Omega _n$, then
$\psi _{n+1}\in\vscript (H_{n+1})$.
\end{lem}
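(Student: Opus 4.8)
The plan is to verify the single defining condition for membership in $\vscript(H_{n+1})$, namely $\ab{\elbows{1,\psi_{n+1}}}=1$, by summing the hypothesis \eqref{eq46} over all $\omega\in\Omega_n$. The observation that drives this is that the offspring sets $\brac{\omega\to\colon\omega\in\Omega_n}$ form a partition of $\Omega_{n+1}$: each $(n+1)$-path $x_1x_2\cdots x_{n+1}$ has $x_1x_2\cdots x_n$ as its unique initial $n$-path and therefore lies in exactly one set $\omega\to$. Since every $\pscript_i$ is finite, $\Omega_n$ is finite and all the sums below are legitimate finite sums.

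It follows that, in $H_{n+1}=L_2(\Omega_{n+1},2^{\Omega_{n+1}},p_c^{n+1})$, we have the identity $1=\chi_{\Omega_{n+1}}=\sum_{\omega\in\Omega_n}\chi_{\omega\to}$, and likewise $1=\chi_{\Omega_n}=\sum_{\omega\in\Omega_n}\chi_{\brac{\omega}}$ in $H_n$. Using the linearity of $\elbows{\,\cdot\,,\,\cdot\,}$ in its second argument, its conjugate symmetry, the fact that the characteristic functions are real-valued, and the hypothesis \eqref{eq46}, I would compute
\begin{equation*}
\elbows{1,\psi_{n+1}}=\sum_{\omega\in\Omega_n}\elbows{\chi_{\omega\to},\psi_{n+1}}
  =\sum_{\omega\in\Omega_n}\overline{\elbows{\psi_{n+1},\chi_{\omega\to}}}
  =\sum_{\omega\in\Omega_n}\overline{\elbows{\psi_n,\chi_{\brac{\omega}}}}
  =\overline{\elbows{\psi_n,1}}=\elbows{1,\psi_n},
\end{equation*}
the third equality being \eqref{eq46} applied termwise. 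Since $\psi_n\in\vscript(H_n)$ gives $\ab{\elbows{1,\psi_n}}=1$, we conclude $\ab{\elbows{1,\psi_{n+1}}}=1$, so $\psi_{n+1}\in\vscript(H_{n+1})$. (As recorded in Section~2, the further facts $\psi_{n+1}\ne0$ and $\doubleab{\psi_{n+1}}\ge1$ are then immediate from Cauchy--Schwarz.)

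There is no genuine obstacle here: the argument is a short rearrangement, and the only points one must check are that $\brac{\omega\to\colon\omega\in\Omega_n}$ really partitions $\Omega_{n+1}$ --- immediate from the definition of an $n$-path --- and that $\Omega_n$ is finite, so the interchange of sum and inner product is valid. The one place requiring a little care is the bookkeeping of complex conjugates: \eqref{eq46} is an equality of inner products with $\psi$ in the first slot, whereas the functional $\elbows{1,\,\cdot\,}$ has $\psi$ in the second slot, which is exactly why the conjugation step in the display above is needed.
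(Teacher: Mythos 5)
Your proof is correct and follows essentially the same route as the paper's: sum \eqref{eq46} over $\omega\in\Omega_n$, use that the sets $\omega\to$ partition $\Omega_{n+1}$ so the characteristic functions sum to $1$, and conclude $\ab{\elbows{1,\psi_{n+1}}}=\ab{\elbows{1,\psi_n}}=1$. Your explicit attention to the conjugation and to the finiteness of $\Omega_n$ only makes explicit what the paper leaves implicit.
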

\begin{proof}
Since $\psi _n\in\vscript (H_n)$ we have by \eqref{eq46} that
\begin{align*}
\ab{\elbows{\psi _{n+1},1}}&=\ab{\sum _{\omega\in\Omega _n}\elbows{\psi _{n+1},\chi _{\omega\to}}}
  =\ab{\sum _{\omega\in\Omega _n}\elbows{\psi _n,\chi _{\brac{\omega}}}}\\
  &=\ab{\elbows{\psi _n,1}}=1
\end{align*}
The result now follows.
\end{proof}

\begin{cor}       
\label{cor48}
If $\doubleab{\psi _1}=1$ and $\psi _n\in H_n$ satisfies \eqref{eq46} for all $\omega\in\Omega _n$, $n=1,2,\ldots$,
then $\ket{\psi _n}\bra{\psi _n}$ is a QSGP.
\end{cor}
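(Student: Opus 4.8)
The plan is to reduce the statement to Corollary~\ref{cor46} by first promoting the bare hypothesis $\psi_n\in H_n$ to the stronger statement $\psi_n\in\vscript(H_n)$, via induction on $n$. The inductive engine is Lemma~\ref{lem47}, which says exactly that if $\psi_n\in\vscript(H_n)$ and $\psi_{n+1}\in H_{n+1}$ satisfies \eqref{eq46}, then $\psi_{n+1}\in\vscript(H_{n+1})$. So the only genuine work is the base case $n=1$. Here one uses that $\pscript_1$ consists of the single one-element poset, so $\Omega_1$ is a singleton and $H_1=L_2(\Omega_1,2^{\Omega_1},p_c^1)$ is one-dimensional, spanned by $1=\chi_\Omega$ (indeed $e_{x_1}^1=1$ since $p_c^1(x_1)=1$). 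Consequently $\psi_1=\alpha 1$ for some $\alpha\in\complex$; from $\doubleab{\psi_1}=1$ and $\doubleab{1}=1$ we get $\ab{\alpha}=1$, and since $\elbows{1,\psi_1}=\alpha\elbows{1,1}=\alpha$ this gives $\ab{\elbows{1,\psi_1}}=1$, i.e. $\psi_1\in\vscript(H_1)$.

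Next I would run the induction: assuming $\psi_n\in\vscript(H_n)$, the hypothesis that $\psi_{n+1}\in H_{n+1}$ satisfies \eqref{eq46} for every $\omega\in\Omega_n$ lets Lemma~\ref{lem47} conclude $\psi_{n+1}\in\vscript(H_{n+1})$. Hence $\psi_n\in\vscript(H_n)$ for all $n$. But once every $\psi_n$ is a $q$-probability vector, condition \eqref{eq46} is by definition the assertion that $\psi_{n+1}$ is strongly consistent with $\psi_n$ for each $n$. Corollary~\ref{cor46} then immediately yields that $\ket{\psi_n}\bra{\psi_n}\in\qscript_p(H_n)$ is a QSGP, which is what we want.

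The hard part, such as it is, is really just the base case observation: a norm-one vector in $H_n$ need not lie in $\vscript(H_n)$ for $n\ge 2$ (one needs $\ab{\elbows{1,\psi_n}}=1$, which is a separate condition), so the hypothesis $\doubleab{\psi_1}=1$ is only usable because $H_1$ happens to be one-dimensional and its only unit vectors are the scalars $\alpha 1$ with $\ab{\alpha}=1$, all of which are automatically in $\vscript(H_1)$. After that, everything is bookkeeping: feed the base case through Lemma~\ref{lem47} up the filtration and hand the result to Corollary~\ref{cor46}.
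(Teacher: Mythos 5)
Your proposal is correct and follows essentially the same route as the paper: establish $\psi_1\in\vscript(H_1)$, push this up the filtration with Lemma~\ref{lem47}, and invoke Corollary~\ref{cor46}. The only difference is that you spell out why $\doubleab{\psi_1}=1$ forces $\psi_1\in\vscript(H_1)$ (one-dimensionality of $H_1$), a step the paper asserts without comment.
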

\begin{proof}
Since $\doubleab{\psi _1}=1$, it follows that $\psi _1\in\vscript (H_1)$. By Lemma~\ref{lem47},
$\psi _n\in\vscript (H_n)$, $n=1,2,\ldots\,$. Since \eqref{eq46} holds, the result follows from Corollary~\ref{cor46}.
\end{proof}

Another way of writing \eqref{eq46} is
\begin{equation}         
\label{eq47}
\sum _{\omega\to x}p_c^{n+1}(\omega x)\psi _{n+1}(\omega x)=p_c^n(\omega )\psi _n(x)
\end{equation}
for every $\omega\in\Omega _n$.

\section{Discrete Quantum Gravity Models} 
This section gives some examples of QSGP that can serve as models for discrete quantum gravity. The simplest way to construct a QSGP is to form the constant pure DQP $\rho _n=\ket{1}\bra{1}$, $n=1,2,\ldots\,$. To show that
$\rho _n$ is indeed consistent, we have for $\omega\in\Omega _n$ that
\begin{equation*}
\sum _{\omega\to x}p_c^{n+1}(\omega x)=\sum _{\omega\to x}p_c^n(\omega )p_c(\omega\to x)
  =p_c^n(\omega )\sum _{\omega\to x}p_c(\omega\to x)=p_c^n(\omega )
\end{equation*}
so consistency follows from \eqref{eq47}. The corresponding $q$-measures are given by
\begin{equation*}
\mu _n(A)=\ab{\elbows{1,\chi _A}}^2=p_c^n(A)^2
\end{equation*}
for every $A\in\ascript _n$. Hence, $\mu _n$ is the square of the classical measure. Of course, $\ket{1}\bra{1}$ is the global $q$-probability operator for this QSGP and in this case $\sscript (\rho _n)=\ascript$. Moreover, we have the global $q$-measure $\mu (A)=\nu _c(A)^2$ for $A\in\ascript$.

Another simple way to construct a QSGP is to employ Corollary~\ref{cor48}. In this way we can let $\psi _1=1$,
$\psi _2$ any vector in $L_2(\Omega _2,2^{\Omega _2},p_c^2)$ satisfying
\begin{equation*}
\elbows{\psi _2,\chi _{\brac{x_1x_2}}}+\elbows{\psi _2\chi _{\brac{x_1x_3}}}=\elbows{\psi _1,\chi _{\brac{x_1}}}=1
\end{equation*}
and so on, where $x_1,x_2,x_3$ are given in Figure~2. As a concrete example, let $\psi _1=1$,
\begin{equation*}
\psi _2=\tfrac{1}{2}\sqbrac{p_c^2(x_1x_2)^{-1}\chi _{\brac{x_1x_2}}+p_c^2(x_1x_3)\chi _{\brac{x_1x_3}}}
\end{equation*}
and in general
\begin{equation*}
\psi _n=\frac{1}{\ab{\Omega _n}}\sum _{\omega\in\Omega _n}p_c^n(\omega )^{-1}\chi _{\brac{\omega}}
\end{equation*}
The $q$-measure $\mu _1$ is $\mu _1\paren{\brac{x_1}}=1$ and $\mu _2$ is given by
\begin{align*}
\mu _2\paren{\brac{x_1x_2}}&=\ab{\elbows{\psi _2,\chi _{\brac{x_1x_2}}}}^2=\tfrac{1}{4}\\
\mu _2\paren{\brac{x_1x_3}}&=\ab{\elbows{\psi _2,\chi _{\brac{x_1x_3}}}}^2=\tfrac{1}{4}\\
\mu _2(\Omega _2)&=\ab{\elbows{\psi _2,1}}^2=1
\end{align*}
In general, we have $\mu _n(A)=\ab{A}^2/\ab{\Omega _n}^2$ for all $A\in\Omega _n$. Thus $\mu _n$ is the
square of the uniform distribution. The global operator does not exist because there is no $q$-measure on $\ascript$ that extends $\mu _n$ for all $n\in\positive$. For $A\in\ascript$ we have
\begin{equation*}
\elbows{\psi _n,\chi _A}=\int\psi _n\chi _Ad\nu _c
  =\frac{\ab{A\cap\brac{\rmcyl (\omega )\colon\omega\in\Omega _n}}}{\ab{\Omega _n}}
\end{equation*}
Letting $\rho _n=\ket{\psi _n}\bra{\psi _n}$ we conclude that $A\in\sscript (\rho _n)$ if and only if
\begin{equation*}
\lim _{n\to\infty}\frac{\ab{A\cap\brac{\rmcyl (\omega )\colon\omega\in\Omega _n}}}{\ab{\Omega _n}}
\end{equation*}
exists. For example, if $\ab{A}<\infty$ then for $n$ sufficiently large we have
\begin{equation*}
\ab{A\cap\brac{\rmcyl (\omega )\colon\omega\in\Omega _n}}=\ab{A}
\end{equation*}
so $A\in\sscript (\rho _n)$ and $\mu (A)=0$. In a similar way if $\ab{A}<\infty$ then for the complement $A'$, if $n$ is sufficiently large we have
\begin{equation*}
\ab{A'\cap\brac{\rmcyl (\omega )\colon\omega\in\Omega _n}}=\ab{\Omega _n}-\ab{A}
\end{equation*}
so $A'\in\sscript (\rho _n)$ with $\mu (A')=1$.

We now present another method for constructing a QSGP. Unlike the previous method this DQP is not pure. Let
$\alpha _\omega\in\complex$, $\omega\in\Omega _n$ satisfy
\begin{equation}         
\label{eq51}
\ab{\sum _{\omega\in\Omega _n}\alpha _\omega p_c^n(\omega )^{1/2}}=1
\end{equation}
and let $\rho _n$ be the operator on $H_n$ satisfying
\begin{equation}         
\label{eq52}
\elbows{\rho _ne_\omega ^n,e_{\omega '}^n}=\alpha _{\omega '}\overline{\alpha _\omega}
\end{equation}
Then $\rho _n$ is a positive operator and by \eqref{eq51}, \eqref{eq52} we have
\begin{align*}
\elbows{\rho _n1,1}
&=\elbows{\rho _n\sum _\omega p_c^n(\omega )^{1/2}e_\omega ^n,
  \sum _{\omega '}p_c^n(\omega ')^{1/2}e_{\omega '}^n}\\
  &=\sum _{\omega ,\omega '}p_c^n(\omega )^{1/2}p_c^n(\omega ')^{1/2}
  \elbows{\rho _ne_\omega ^n,e_{\omega '}^n}\\
  &=\ab{\sum _\omega p_c^n(\omega )^{1/2}\alpha _\omega}^2=1
\end{align*}
Hence, $\rho _n\in\qscript (H_n)$. Now
\begin{equation*}
\Omega _{n+1}=\brac{\omega x\colon\omega\in\Omega _n,x\in\pscript _{n+1},\omega\to x}
\end{equation*}
and for each $\omega x\in\Omega _{n+1}$, let $\beta _{\omega x}\in\complex$ satisfy
\begin{equation*}
\ab{\sum _{\omega x\in\Omega _{n+1}}\beta _{\omega x}p_c^{n+1}(\omega x)^{1/2}}=1
\end{equation*}
Let $\rho _{n+1}$ be the operator on $H_{n+1}$ satisfying
\begin{equation}         
\label{eq53}
\elbows{\rho _{n+1}e_{\omega x}^{n+1},e_{\omega 'x'}^{n+1}}=\beta _{\omega 'x'}\overline{\beta _{\omega x}}
\end{equation}
As before, we have that $\rho _{n+1}\in\qscript (H_{n+1})$. The next result follows from Theorem~\ref{thm42}.

\begin{thm}       
\label{thm51}
The operator $\rho _{n+1}$ is consistent with $\rho _n$ if and only if for every $\omega ,\omega'\in\Omega _n$ we have
\begin{equation}         
\label{eq54}
\alpha _{\omega '}\overline{\alpha _\omega}
 =\sum _{\substack{x'\in\pscript _{n+1}\\\omega '\to x'}}\beta _{\omega 'x'}p_c(\omega '\to x')^{1/2}
 \sum _{\substack{x\in\pscript _{n+1}\\\omega\to x}}\overline{\beta _{\omega x}}p_c(\omega\to x)^{1/2}
\end{equation}
\end{thm}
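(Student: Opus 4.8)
The plan is to apply Theorem~\ref{thm42} and then substitute the defining relations \eqref{eq52} and \eqref{eq53}. By Theorem~\ref{thm42}, $\rho_{n+1}$ is consistent with $\rho_n$ if and only if \eqref{eq43} holds for every $\omega,\omega'\in\Omega_n$. The left-hand side of \eqref{eq43} is $\elbows{\rho_n e_{\omega'}^n,e_\omega^n}$, which by \eqref{eq52}, read with $\omega$ and $\omega'$ interchanged, equals $\alpha_\omega\overline{\alpha_{\omega'}}$; and each factor $\elbows{\rho_{n+1}e_{\omega'x}^{n+1},e_{\omega y}^{n+1}}$ on the right-hand side of \eqref{eq43} equals $\beta_{\omega y}\overline{\beta_{\omega'x}}$ by \eqref{eq53}. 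Thus consistency is equivalent to the family of identities
\begin{equation*}
\alpha_\omega\overline{\alpha_{\omega'}}=\sum_{\substack{x\in\pscript_{n+1}\\\omega'\to x}}\sum_{\substack{y\in\pscript_{n+1}\\\omega\to y}}p_c(\omega'\to x)^{1/2}p_c(\omega\to y)^{1/2}\beta_{\omega y}\overline{\beta_{\omega'x}},\qquad\omega,\omega'\in\Omega_n .
\end{equation*}

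The key step is to notice that the summand factorizes as $\paren{\beta_{\omega y}p_c(\omega\to y)^{1/2}}\paren{\overline{\beta_{\omega'x}}p_c(\omega'\to x)^{1/2}}$, so the double sum separates into a product of two single sums, giving
\begin{equation*}
\alpha_\omega\overline{\alpha_{\omega'}}=\paren{\sum_{\substack{y\in\pscript_{n+1}\\\omega\to y}}\beta_{\omega y}p_c(\omega\to y)^{1/2}}\paren{\sum_{\substack{x\in\pscript_{n+1}\\\omega'\to x}}\overline{\beta_{\omega'x}}p_c(\omega'\to x)^{1/2}}.
\end{equation*}
Interchanging the names $\omega$ and $\omega'$ — legitimate since the condition is imposed for every ordered pair — and renaming the dummy index $y$ to $x'$ turns this into \eqref{eq54} verbatim, which completes the argument.

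There is no genuine obstacle here: the proof is pure bookkeeping in substitution and reindexing. The only points that require care are tracking which path-index carries the complex conjugate in \eqref{eq52} and \eqref{eq53}, and attaching each weight $p_c(\cdot\to\cdot)^{1/2}$ to the correct one of the two separated sums. (That $\rho_n\in\qscript(H_n)$ and $\rho_{n+1}\in\qscript(H_{n+1})$ has already been verified above, so Theorem~\ref{thm42} is applicable.)
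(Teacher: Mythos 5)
Your proof is correct and follows exactly the route the paper intends: the paper offers no details beyond the remark that the result ``follows from Theorem~\ref{thm42},'' and your substitution of \eqref{eq52} and \eqref{eq53} into \eqref{eq43}, followed by the factorization of the double sum and the harmless relabeling of $\omega\leftrightarrow\omega'$, is precisely that derivation carried out explicitly.
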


A sufficient condition for \eqref{eq54} to hold is
\begin{equation}         
\label{eq55}
\sum _{\substack{x\in\pscript _{n+1}\\\omega\to x}}\beta _{\omega x}p_c(\omega\to x)^{1/2}=\alpha _\omega
\end{equation}
The proof of the next result is similar to the proof of Lemma~\ref{lem47}.

\begin{lem}       
\label{lem52}
Let $\rho _n\in\qscript (H_n)$ be defined by \eqref{eq52} and let $\rho _{n+1}$ be the operator on $H_{n+1}$ defined by \eqref{eq53}. If \eqref{eq55} holds, then $\rho _{n+1}\in\qscript (H_{n+1})$ and $\rho _{n+1}$ is consistent with
$\rho _n$.
\end{lem}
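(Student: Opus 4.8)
The plan is to imitate the proof of Lemma~\ref{lem47}: first use \eqref{eq55} to promote $\rho_{n+1}$ into $\qscript(H_{n+1})$, and then quote Theorem~\ref{thm51} to reduce the consistency assertion to the single identity \eqref{eq54}, which \eqref{eq55} makes transparent.

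For membership in $\qscript(H_{n+1})$, the matrix-element prescription \eqref{eq53} already exhibits $\rho_{n+1}$ as a rank-one positive operator (this is the ``as before'' remark just before the lemma), so the only point to settle is the normalization $\elbows{\rho_{n+1}1,1}=1$, equivalently $\ab{\sum_{\omega x\in\Omega_{n+1}}\beta_{\omega x}p_c^{n+1}(\omega x)^{1/2}}=1$. I would start from $\Omega_{n+1}=\brac{\omega x\colon\omega\in\Omega_n,\ \omega\to x}$, use $p_c^{n+1}(\omega x)=p_c^n(\omega)p_c(\omega\to x)$ to split the square root, regroup the double sum as $\sum_{\omega\in\Omega_n}p_c^n(\omega)^{1/2}\paren{\sum_{\omega\to x}\beta_{\omega x}p_c(\omega\to x)^{1/2}}$, collapse the inner sum to $\alpha_\omega$ by \eqref{eq55}, and finish with \eqref{eq51}. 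This is the exact analogue of the string of equalities in the proof of Lemma~\ref{lem47}, with \eqref{eq55} playing the role of \eqref{eq46} and \eqref{eq51} the role of the normalization of $\psi_n$.

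For consistency, Theorem~\ref{thm51} reduces the claim to verifying \eqref{eq54} for all $\omega,\omega'\in\Omega_n$. But the right-hand side of \eqref{eq54} is the product of $\sum_{\omega'\to x'}\beta_{\omega'x'}p_c(\omega'\to x')^{1/2}$ and $\sum_{\omega\to x}\overline{\beta_{\omega x}}p_c(\omega\to x)^{1/2}$; since each transition probability $p_c(\omega\to x)$ is a nonnegative real, the conjugation in the second factor can be pulled outside to give $\overline{\sum_{\omega\to x}\beta_{\omega x}p_c(\omega\to x)^{1/2}}$. Applying \eqref{eq55} to each factor turns the right-hand side of \eqref{eq54} into $\alpha_{\omega'}\overline{\alpha_\omega}$, which by \eqref{eq52} is its left-hand side; so \eqref{eq54} holds and Theorem~\ref{thm51} gives consistency.

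No step presents a genuine obstacle; the one thing that needs care is the bookkeeping of the double-indexed sum over $\Omega_n$ and the offspring of each $\omega\in\Omega_n$, together with the verification that the path probabilities factor so that the square roots split and the inner sums become exactly the left-hand side of \eqref{eq55}. Once that is arranged, both halves of the lemma are one-line substitutions.
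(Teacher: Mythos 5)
Your proof is correct and takes essentially the route the paper intends: the paper omits the argument, remarking only that it is ``similar to the proof of Lemma~\ref{lem47},'' and your normalization computation (splitting the sum over $\Omega_{n+1}$, factoring $p_c^{n+1}(\omega x)=p_c^n(\omega)p_c(\omega\to x)$, collapsing the inner sum via \eqref{eq55}, and invoking \eqref{eq51}) is exactly that analogue. Your consistency step is likewise the observation, already made in the text immediately preceding the lemma, that \eqref{eq55} factors the right-hand side of \eqref{eq54} into $\alpha_{\omega'}\overline{\alpha_\omega}$.
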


The next result gives the general construction.

\begin{cor}       
\label{cor53}
Let $\rho _1=I\in\qscript (H_1)$ and define $\rho _n\in\qscript (H_n)$ inductively by \eqref{eq53}. Then
$\rho _n$ is a QSGP.
\end{cor}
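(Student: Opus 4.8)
The plan is to verify directly that the inductively constructed sequence $\rho_n$ satisfies the two defining properties of a QSGP: each $\rho_n$ lies in $\qscript(H_n)$, and the sequence is consistent, i.e. $\rho_{n+1}$ is consistent with $\rho_n$ for every $n$. Both will be established by induction on $n$, with Lemma~\ref{lem52} supplying the inductive step once the base case is in hand.

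For the base case $n=1$, note that $\Omega_1$ consists of the single $1$-path $x_1$ (the unique poset of cardinality one), so $H_1$ is one-dimensional with orthonormal basis $e_{x_1}^1 = p_c^1(x_1)^{1/2}\chi_{\rmcyl(x_1)} = \chi_\Omega = 1$, since $p_c^1(x_1)=1$. Thus $\rho_1 = I$ acts as $\elbows{\rho_1 e_{x_1}^1, e_{x_1}^1} = 1$, which is exactly the form \eqref{eq52} with the single coefficient $\alpha_{x_1}=1$; and then $\ab{\alpha_{x_1} p_c^1(x_1)^{1/2}} = 1$, so \eqref{eq51} holds and $\rho_1 \in \qscript(H_1)$. (Alternatively, $\elbows{\rho_1 1, 1} = \elbows{1,1} = \doubleab{1}^2 = 1$ directly.)

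For the inductive step, suppose $\rho_n \in \qscript(H_n)$ has been defined via \eqref{eq52} with coefficients $\alpha_\omega$ satisfying \eqref{eq51}. To define $\rho_{n+1}$ by \eqref{eq53} one must first choose coefficients $\beta_{\omega x}$, $\omega x \in \Omega_{n+1}$, and here the natural choice is any solution of the linear system \eqref{eq55}: for each fixed $\omega \in \Omega_n$, distribute $\alpha_\omega$ among the offspring $x$ with $\omega \to x$ subject to $\sum_{\omega\to x}\beta_{\omega x} p_c(\omega\to x)^{1/2} = \alpha_\omega$. Such $\beta_{\omega x}$ always exist since each such equation is a single linear constraint on the (nonempty) set of $\beta_{\omega x}$ with that $\omega$, and the not-all-zero transition probabilities $p_c(\omega\to x)$ guarantee solvability (e.g. put all the weight on one child, or spread it using the Markov-sum rule). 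With \eqref{eq55} in force, Lemma~\ref{lem52} gives immediately that $\rho_{n+1} \in \qscript(H_{n+1})$ and that $\rho_{n+1}$ is consistent with $\rho_n$. This completes the induction, and by Corollary~\ref{cor43} the sequence $\rho_n$ is a QSGP.

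The only genuine subtlety — and hence the point to state carefully rather than the hardest computation — is that Corollary~\ref{cor53} as phrased presupposes a choice of the $\beta_{\omega x}$ at each stage; the content of the corollary is that \emph{any} such choice consistent with \eqref{eq55} yields a QSGP, so the construction is not unique but always succeeds. One should also check the compatibility of the indexing: the $\beta$-coefficients at level $n+1$ play the role of the $\alpha$-coefficients when passing from level $n+1$ to level $n+2$, and \eqref{eq55} with $n$ replaced by $n+1$ is exactly the constraint needed to continue. Since \eqref{eq51} at level $n+1$ (namely $\ab{\sum_{\omega x}\beta_{\omega x}p_c^{n+1}(\omega x)^{1/2}}=1$) is part of the conclusion of Lemma~\ref{lem52}, the inductive hypothesis is faithfully reproduced at each step, and no further verification is required.
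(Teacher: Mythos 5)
Your proof is correct and follows the route the paper intends: the paper leaves Corollary~\ref{cor53} without an explicit argument, the intended proof being exactly your induction with $\rho_1=I$ as the (one-dimensional) base case and Lemma~\ref{lem52} supplying each inductive step. Your added observation that the statement implicitly requires the $\beta_{\omega x}$ at each stage to be chosen subject to \eqref{eq55} (and that such a choice always exists and reproduces the hypothesis \eqref{eq51} at the next level) is a legitimate clarification of the corollary rather than a deviation from it.
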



\begin{thebibliography}{99}
\bibitem{blms87}L.~Bombelli, J.~Lee, D.~Meyer and R.~Sorkin, 
Spacetime as a casual set, \textit{Phys. Rev. Lett.} \textbf{59} (1987), 521--524.
\bibitem{dgt08}F~Dowker, S.~Johnston and S.~Surya, 
On extending the quantum measure, arXiv: quant-ph 1002:2725 (2010).
\bibitem{gud11}S.~Gudder, Discrete quantum gravity, arXiv: gr-qc 1108.2296 (2011).
\bibitem{hen09}J.~Henson, Quantum histories and quantum gravity, arXiv: gr-qc  0901.4009 (2009).
\bibitem{rs00}D.~Rideout and R.~Sorkin, A classical sequential growth dynamics for causal sets,
\textit{Phys. Rev. D} \textbf{61} (2000), 024002.
\bibitem{sor03}R.~Sorkin, 
Causal sets: discrete gravity, arXiv: gr-qc 0309009 (2003).
\bibitem{sor11}R.~Sorkin, 
Toward a ``fundamental theorem of quantal measure theory,'' arXiv: hep-th 1104.0997 (2011) and
\textit{Math. Struct. Comm. Sci.} (to appear).
\bibitem{sur11}S.~Surya, 
Directions in causal set quantum gravity, arXiv: gr-qc 1103.6272 (2011).
\bibitem{vr06}M.~Varadarajan and D.~Rideout,
A general solution for classical sequential growth dynamics of causal sets,
\textit{Phys. Rev. D} \textbf{73} (2006), 104021.

\end{thebibliography}
\end{document}